\newtheorem{theorem}{Theorem}
\newtheorem{lemma}{Lemma}
\newtheorem{Note}{Note}
\theoremstyle{definition}
\newtheorem{definition}{Def.}
\newcommand{\be}{\begin{equation}}
\newcommand{\ee}{\end{equation}}
\newcommand{\ben}{\begin{eqnarray}}
\newcommand{\een}{\end{eqnarray}}
\newcommand{\bes}{\begin{subequations}}
\newcommand{\ees}{\end{subequations}}
\newcommand{\bF}{\begin{figure}}
\newcommand{\eF}{\end{figure}}
\DeclareMathAlphabet{\pazocal}{OMS}{zplm}{m}{n}
\newcommand{\Q}{\pazocal{Q}}
\newcommand{\orcid}[1]{\href{https://orcid.org/#1}{\includegraphics[height = 2ex]{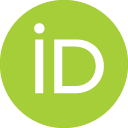}}}
\begin{document}

\title{Explaining the Ubiquity of Phase Transitions in Decision Problems}

\author{Andrew Jackson \orcid{0000-0002-5981-1604}}
\affiliation{Department of Physics, University of Warwick, Coventry CV4 7AL, United Kingdom}
\date{\today}


\begin{abstract}
     I present an analytic approach to establishing the presence of phase transitions in a large set of decision problems. This approach does not require extensive computational study of the problems considered. The set -- that of all paddable problems over even-sized alphabets satisfying a condition similar to not being sparse -- shown to exhibit phase transitions contains many ``practical" decision problems, is very large, and also contains extremely intractable problems.
\end{abstract}

\maketitle
\section{Introduction}
\label{IntroSec}
Phase transitions are the epitome of emergent behavior in both many-body physics and theoretical computer science.
A phase transition in a decision problem -- as in a physical system -- is a sudden/rapid change in the behavior of the problem (or physical system) as a specific parameter changes and this typically -- in decision problems, that may be considered WLOG as deciding if a string is in a given language -- takes the form of a sudden change in the probability a random input, with a specific value of the parameter, is in the relevant language. They are of great interest~\cite{saitta_giordana_cornuejols_antoine_2011} to computer scientists and mathematicians with ``a crucial role in the field of artificial intelligence and computational complexity theory"~\cite{FLAXMAN2004301}.

Phase transitions are also vitally important in the physics of condensed matter/many-body systems --  and consequently are of interest to physicists~\cite{domb2000phase, bruce1981structural, papon2002physics} -- but are an emergent property of many-body systems and hence are not easily studied theoretically. 
Their importance, coupled with the complexity of deriving their existence and properties, results in emerging computational techniques and increasing computational power continually being applied~\cite{PhysRevB.94.195105} to their study~\footnote{E.g. the use of quantum computers to evaluate partition functions at complex temperatures, looking for their zeros -- and hence phase transitions~\cite{PhysRevA.107.012421}}. It is therefore hoped that investigations of phase transitions in decision problems could shed light on their condensed matter cousins. As phase transitions also appear even further afield, such as in biology~\cite[Chapter~6]{stein_newman_2013} and sociology~\cite{doi:10.1080/0022250X.1982.9989929}, investigations of phase transitions may find applicability in many fields not conventionally associated with phase transitions.

Given the paramount significance of phase transitions in such a wide array of fields, the formidable computational barriers to investigations based on ``experimental" research (i.e. by the computation of a large number of instances of a problem and looking for a phase transition e.g. \cite{1inKSAT, Num.Ev, RandomGraph, Schawe_2016}), which provide the foundation of our current understanding of phase transitions, may be severely limiting to these fields.
These investigations tend to be computationally intensive, requiring extensive sampling of the problem. It is then not hard to see that theoretical, non-computational, approaches to identifying problems with phase transitions -- based on easily checked features of the problems -- would be useful.
This is especially true for problems with complexity substantially greater than polynomial time and/or with very few instances much easier than the worst case.

However, these pre-existing computational studies do provide a good background on phase transitions. We know, for instance, that phase transitions are not too rare, being observed in many decision problems.
Most notably, NP-complete decision problems often have phase transitions, but phase transitions have been observed in much harder problems~\cite{Gent1994TheSP}.
For example, problems observed to have phase transitions include: playing Minesweeper~\cite{minesweeperPhase}, K-colourability~\cite{https://doi.org/10.1002/(SICI)1098-2418(1999010)14:1<63::AID-RSA3>3.0.CO;2-7}, 3 and 4 - SAT~\cite{Gent1994TheSP}, protein folding \cite[Chapter~6]{stein_newman_2013}, and the dynamics of strike action~\cite{doi:10.1080/0022250X.1982.9989929}.

In this paper, I start, in Sec.~\ref{Def.sPrelimSubsection}, by setting out my notation and reviewing the basic concepts of decision problems and their phase transitions. Sec.~\ref{defAndPrelim} then builds towards introducing a formal definition (Def.~\ref{PhaseDef}) that captures a subset of phase transitions. I then proceed, in Sec.~\ref{MainResSubsec} via the work of Faragó~\cite{farago2016roughly}, to show (in Theorem~\ref{mainResultTheorem}) that phase transitions occur in a large number of decision problems (those which are not-anywhere-exponentially-unbalanced, paddable~\footnote{See Def.~\ref{padDef}.}, and over even-sized alphabets).
\section{Preparation and Preliminary Results}
\subsection{Preliminary Definitions}
\label{Def.sPrelimSubsection}
\begin{definition}
    An \underline{alphabet} is a finite set of distinct symbols e.g. $\{a, b, c, d, e, f, g\}$. Throughout this paper, alphabets will generally be denoted by $\Sigma$ and I will assume $\vert \Sigma \vert$ (the size of the alphabet $\Sigma$) is both even and at least two.
\end{definition}
\begin{definition}
    Let $\Sigma$ be any alphabet, then I define \underline{$\Sigma^*$} as the set of all strings -- including the empty string -- of symbols from $\Sigma$. 
\end{definition}

\begin{definition}
    For any alphabet, $\Sigma$, $x$ is a \underline{word over $\Sigma$} $\iff$ $x \in \Sigma^*$.
\end{definition}

\begin{definition}
    $\mathcal{L}$ is a \underline{language over $\Sigma$} $\iff$ $\mathcal{L} \in \text{POW}\big( \Sigma^* \big)$,
    where $\text{POW}\big( \Sigma^* \big)$ denotes the power set of $\Sigma^*$.
    I.e. a language is a set of words over $\Sigma$ .
\end{definition}

\begin{definition}
    A \underline{decision problem} is any problem that gives a binary (e.g. yes/no) answer.
    For the purposes of this paper, decision problems will exclusively be restricted to deciding if a given word is in a given language.
\end{definition}

\begin{definition}
    An algorithm, $\mathbb{A}$, that solves a problem, $\mathbb{P}$, has \underline{complexity}, $\mathcal{O}(f[N])$, if for every instance, $p \in \mathbb{P}$, of size $N$, $\mathbb{A}$ takes $\mathcal{O}( f[N] )$ time. 
\end{definition}

\begin{definition}
Let $\mathbb{P}$ be a decision problem and A be the set of all algorithms that solve $\mathbb{P}$. Let  $\mathbb{A}$ be the algorithm in A with the smallest (in the asymptotic limit~\footnote{Meaning in the limit as the instance size tends to infinity.}) complexity, then the \underline{complexity of $\mathbb{P}$} is the complexity of $\mathbb{A}$.
\end{definition}

\begin{definition}
    A function is said to be \underline{polynomial time computable} if there exists some polynomial function, $f_{\textit{poly}}$, such that for any possible input to the computation, $x$, the run-time of the computation is upper bounded by $f_{\textit{poly}} \big( \vert x \vert \big)$, where $\vert x \vert$ is the length of $x$.
\end{definition}

I now switch focus and present the definitions required for a discussion of phase transitions in decision problems.
\begin{definition}
\label{OGA}
   $ \forall \mathcal{S} \subseteq \Sigma^*$, I define the \underline{accepting fraction}, $\mathcal{A}[\mathcal{S}]$, relative to some specific language, $\mathcal{L}$, to be the fraction of $\mathcal{S}$ that is in $\mathcal{L}$.
\end{definition}

\begin{definition}
    \label{orderParameterDef}
    Any polynomial time computable mapping from $\Sigma^*$ to $\mathbb{R}$ is a \underline{parameter}. 
\end{definition}

\begin{definition}
    \label{def:parameterSlice}
    Let  $\gamma (x): \Sigma^* \longrightarrow \mathbb{R}$ be a polynomial time computable parameter. Define the \underline{parameter slice}, $\mathcal{S}^{\gamma}_{n} \subseteq \Sigma^*$, as the set of all $x \in \Sigma^*$ such that the parameter, $\gamma$, takes the value $n \in \mathbb{R}$.
\end{definition}

\subsection{Phase Transition Definitions and Preliminary Results}
\label{defAndPrelim}
I am now ready to start working towards a definition of phase transitions. The aim is to capture the behavior seen in computational experiments, i.e. phenomena as in Fig.~\ref{fig:Example Image}, but I would prefer to also capture a more broad range of phenomena that I would also consider to be phase transitions -- so that I am not presuming exactly what form a phase transition must take. Hence, I use Fig.~\ref{fig:Example Image} as a template but broaden the scope of this investigation slightly.
\begin{figure}[h!]
    \centering
\includegraphics[width=0.4\textwidth]{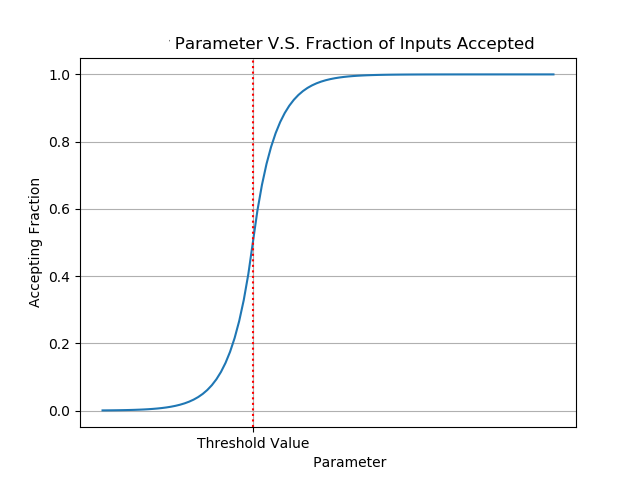}
    \caption{An Example of a Canonical Form Phase Transition}
    \label{fig:Example Image}
\end{figure}
Images similar to Fig.~\ref{fig:Example Image} can be seen in experimental data and examples are found in:\\ $\bullet$ Ref. \cite{minesweeperPhase} (Figure 4),\\ 
$\bullet$ Ref. \cite{Gent1994TheSP} (Figures 2, 4, 6, 7),\\ $\bullet$ Ref. \cite{101371journalpone0215309} (Figures 2, 3, 4),\\
$\bullet$ Ref. \cite{cross3Sat} (Figures 1, 3),\\ 
$\bullet$ Ref. \cite{BAILEY20071627} (Figures 1, 3).\\
Note these are not the only cited papers to have phase transitions of this form (and this is far from a complete list of all problems with phase transitions of this form), but they have the transition plotted so as to easily see it is of the same/similar form as Fig.~\ref{fig:Example Image}.
Other examples, less clearly of the form required, are in:\\
$\bullet$ Ref. \cite{saitta_giordana_cornuejols_antoine_2011} (Figures 9.7, 11.11, B.1),\\
$\bullet$ Ref. \cite{stanley_1971} (Figures 2.6(d), 2.8(c), 3.3(b),  6.2),\\
I attempt to consolidate the observations of phase transitions into a precise definition -- in Def.~\ref{PhaseDef} -- but first I must define a notion of ``where'' a phase transition occurs, in Def.~\ref{thresholdDef}.
\begin{definition}
\label{thresholdDef}
    A \underline{threshold value}, $\mathcal{T}$, is the closest I can come to assigning a value of a given parameter as ``where'' a phase transition occurs. Define a threshold value as the minimum $\mathcal{T} \in \mathbb{R}$ such that:
    $\forall y \in \mathbb{R},$
    \begin{align}
        n > \mathcal{T} \iff \mathcal{A}[\mathcal{S}^{\gamma}_{n}] > 0.5.
    \end{align}
\end{definition}
\begin{definition}
\label{PhaseDef}
    A language, $\mathcal{L} \subseteq \Sigma^*$, exhibits a \underline{phase transition} if and only if there exists a parameter, $\gamma: \Sigma^* \rightarrow \mathbb{R}$, such that:
    \begin{enumerate}
        \item As $n \rightarrow \infty$, $\mathcal{A}[\mathcal{S}^{\gamma}_n]\rightarrow 1$, with a monotonically increasing lower bound.
    \item As $n \rightarrow -\infty$, $\mathcal{A}[\mathcal{S}^{\gamma}_n]\rightarrow 0$, with a monotonically decreasing upper bound.   
    \item  The fraction of input strings that take values of $\gamma$ between A and A + $\delta$ -- for a small fixed $\delta$ -- grows at least exponentially (or at least does beyond a short distance from the threshold value) as $\vert$ A - $\mathcal{T}$ $\vert$ increases, where $\mathcal{T} \in \mathbb{R}$ is the threshold value.
    \end{enumerate}
\end{definition}
I note that the more general definition of a phase transition is not normally as prescriptive about the third of the above requirements, but I note Def.~$\ref{PhaseDef}$ meets the standard definition which usually just requires that:\\
 
``Few input strings have values of the parameter close to the threshold value (shown in Fig.~\ref{fig:Example Image} and defined in Def.~\ref{thresholdDef})."\\

It is important to stress that the phase transition in Fig.~\ref{fig:Example Image} is just a typical example. Def.~\ref{PhaseDef}, which is use throughout this paper, admits more general phase transitions. For example, while Fig.~\ref{fig:Example Image} is continuous, Def.~\ref{PhaseDef} does not require continuity.\\

Thus far I have assumed an orientation of the phase transition. Namely, that:\\
 $\bullet$ As $n \rightarrow \infty$, $\mathcal{A}[\mathcal{S}^{\gamma}_n]\rightarrow 1$.\\
 $\bullet$ As $n \rightarrow -\infty$, $\mathcal{A}[\mathcal{S}^{\gamma}_n]\rightarrow 0$.\\
Rather than:\\
$\bullet$ As $n \rightarrow \infty$, $\mathcal{A}[\mathcal{S}^{\gamma}_n]\rightarrow 0$.\\
 $\bullet$ As $n \rightarrow -\infty$, $\mathcal{A}[\mathcal{S}^{\gamma}_n]\rightarrow 1$.\\
Which would equally be a valid phase transition. I refer to the orientation assumed in Def.~\ref{PhaseDef} as the canonical form and the newly mentioned one as the inverted form. Def.~\ref{PhaseDefInverted} -- which defines an inverted form phase transition -- can be seen to be almost identical to the definition of a canonicalform phase transition in Def.\ref{PhaseDef}.
\begin{definition}
\label{PhaseDefInverted}
    A language, $\mathcal{L}$, is said to exhibit an \underline{inverted form phase transition} if and only if there exists a polynomial time calculable parameter, $\gamma: \Sigma^* \longrightarrow \mathbb{R}$, such that:
    \begin{enumerate}
    \item  As $n \rightarrow \infty$, $\mathcal{A}[\mathcal{S}^{\gamma}_n]\rightarrow 0$, with a monotonically decreasing bound.
    \item  As $n \rightarrow -\infty$, $\mathcal{A}[\mathcal{S}^{\gamma}_n]\rightarrow 1$, with a monotonically increasing bound.
    \item  The fraction of input strings that take values of $\gamma$ between A and A + $\delta$ grows at least exponentially (or at least does beyond a short distance from the threshold value) (in  $\vert$ A - $\mathcal{T}$ $\vert$) as $\vert$ A - $\mathcal{T}$ $\vert$ increases, where $\mathcal{T} \in \mathbb{R}$ is the threshold value -- albeit adapted to the inverted form -- and $\delta$ is fixed.
    \end{enumerate}
\end{definition}

However, I do not -- from here on -- consider inverted form phase transitions due to Lemma $\ref{inverted}$.
\begin{lemma}
\label{inverted}
For every inverted form phase transition there exists a canonical form phase transition describing the exact same phenomena -- and vice versa.
\end{lemma}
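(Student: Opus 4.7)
The plan is to exhibit an explicit involutive bijection between witnesses of canonical and inverted phase transitions by negating the parameter. Given an inverted form phase transition of a language $\mathcal{L}$ witnessed by a parameter $\gamma : \Sigma^* \to \mathbb{R}$ with threshold $\mathcal{T}$, I would define $\gamma'(x) := -\gamma(x)$ and claim that $\gamma'$ witnesses a canonical form phase transition of the same $\mathcal{L}$ with threshold $\mathcal{T}' = -\mathcal{T}$. Since negation is its own inverse, the same construction applied to a canonical witness produces an inverted one, yielding both directions of the lemma at once; moreover, the underlying language $\mathcal{L}$ is unchanged, so the two phase transitions really do describe the same accepting-fraction phenomenon on parameter slices.

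First, $\gamma'$ is polynomial time computable because $\gamma$ is and negation adds only constant overhead, so $\gamma'$ qualifies as a parameter under Def.~\ref{orderParameterDef}. The key structural identity is $\mathcal{S}^{\gamma'}_n = \mathcal{S}^{\gamma}_{-n}$, whence $\mathcal{A}[\mathcal{S}^{\gamma'}_n] = \mathcal{A}[\mathcal{S}^{\gamma}_{-n}]$ for every $n \in \mathbb{R}$. This identity translates clause 2 of Def.~\ref{PhaseDefInverted} (as $m \to -\infty$, $\mathcal{A}[\mathcal{S}^{\gamma}_m] \to 1$ with a monotonically increasing bound) into clause 1 of Def.~\ref{PhaseDef} (as $n \to \infty$, $\mathcal{A}[\mathcal{S}^{\gamma'}_n] \to 1$ with a monotonically increasing lower bound) via the substitution $m = -n$; the other limit is symmetric. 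The threshold identity $\mathcal{T}' = -\mathcal{T}$ then follows directly from Def.~\ref{thresholdDef}, since $\mathcal{A}[\mathcal{S}^{\gamma'}_n] > 0.5 \iff \mathcal{A}[\mathcal{S}^{\gamma}_{-n}] > 0.5 \iff -n < \mathcal{T} \iff n > -\mathcal{T}$, using the inverted analogue of the threshold definition alluded to in Def.~\ref{PhaseDefInverted}.

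For clause 3, the set of inputs with $\gamma'(x) \in [A, A+\delta]$ is in bijection with the set of inputs with $\gamma(x) \in [-A-\delta, -A]$, and $|A - \mathcal{T}'| = |(-A) - \mathcal{T}|$, so distance from the threshold is preserved and exponential growth on one side transfers verbatim to the other. The main (and essentially only) obstacle is careful bookkeeping of the monotonicity qualifiers: one must verify that reindexing $n \mapsto -n$ correctly swaps ``upper bound'' with ``lower bound'' and preserves the sense in which the bound tightens toward its limit, so that clauses 1 and 2 of Def.~\ref{PhaseDef} line up exactly with the reindexed versions of clauses 2 and 1 of Def.~\ref{PhaseDefInverted}. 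Once this short dictionary is tabulated, the involutive bijection is established and the lemma follows.
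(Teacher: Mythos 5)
Your proposal is correct and follows essentially the same route as the paper's own proof: both negate the parameter ($\gamma'(x) = -\gamma(x)$), observe that the asymptotic limits swap into canonical form, and note that the density condition around the threshold is symmetric under this reflection. Your version is, if anything, slightly more explicit about the slice identity $\mathcal{S}^{\gamma'}_n = \mathcal{S}^{\gamma}_{-n}$ and the threshold mapping $\mathcal{T}' = -\mathcal{T}$, but the underlying argument is the same.
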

\begin{proof}
    Assume that there exists the required inverted form phase transition, with all notation as defined in Def.~\ref{PhaseDefInverted}.
    I begin with the asymptotic behavior requirements.
    As per Def.~\ref{PhaseDefInverted}, an inverted form phase transition has the asymptotic limits:
    \begin{align}
        \lim_{\tau \longrightarrow -\infty} \bigg( \mathcal{A} \big[ \mathcal{S}^{\Gamma}_\tau \big] \bigg) = 1 \text{ and }
        \lim_{\tau \longrightarrow \infty} \bigg( \mathcal{A} \big[ \mathcal{S}^{\Gamma}_\tau \big] \bigg) = 0,
    \end{align}
    where $\Gamma: \Sigma^* \rightarrow \mathbb{R}$ is the parameter that the inverted phase transition exists relative to. Define a parameter, $\Gamma^{\prime} : \Sigma^* \rightarrow \mathbb{R}$, by:
    \begin{align}
        \label{eqn:DefiningInvertedParameter}
        \Gamma^{\prime}(x) &= - \Gamma(x).
    \end{align}
    Using this new parameter, in Eqn.~\ref{eqn:DefiningInvertedParameter}, and letting $\mathcal{A}''$ be the acceptance fraction according to it (i.e. $\forall \tau' \in \mathbb{R}$, $\mathcal{A}''[\mathcal{S}^{\Gamma}_{\tau'}] = \mathcal{A}[\mathcal{S}^{\Gamma}_{- \tau'}]$), the asymptotic limits are:
    \begin{align}
        \lim_{\tau^{\prime} \longrightarrow \infty}\bigg( \mathcal{A}'' \big[ \mathcal{S}^{\Gamma'}_{\tau^{\prime}} \big] \bigg)
        =
        1 \text{ and }
        \lim_{\tau^{\prime} \longrightarrow -\infty}\bigg( \mathcal{A}'' \big[ \mathcal{S}^{\Gamma'}_{\tau^{\prime}} \big] \bigg)
        =
        0.
    \end{align}
    Hence the asymptotics are now as required for a canonical form phase transition. The only remaining thing to consider is that the sparsity conditions (around the threshold value) required of either form of phase transition is unaffected by this redefinition of the parameter: if the density increases exponentially in both directions then flipping the two directions leaves the density still increasing exponentially in both directions.
    
    Therefore, all requirements of a canonical form phase transition are forfilled.
    That for every canonicalform phase transition there exists an inverted form phase transition follows from a similar argument.
\end{proof}
\subsection{Main Result}
\label{MainResSubsec}
\subsubsection{Definitions: Paddability and RoughP}
I first present a review of the notation, concepts, and some results of Ref.~\cite{farago2016roughly}, beginning with the required notions of a P-isomorphism, in Def.~\ref{def:PIsomorph}, and its P-isomorphism output size, in Def.~\ref{pIsoDef}.
\begin{definition}
    \label{def:PIsomorph}
    For any alphabet, $\Sigma$, a bijection from and to $\Sigma^*$ that can be both computed and inverted in polynomial time is referred to as a \underline{P-isomorphism}.   
\end{definition}
\begin{definition}
    \label{pIsoDef}
    Given a P-isomorphism, $\xi: \Sigma^* \rightarrow \Sigma^*$, define the \underline{P-isomorphism output size} of $x\in \Sigma^*$, $N_{\xi}$, as the length of the string $\xi(x)$. i.e. $\forall x \in \Sigma^*$,
    \begin{align}
        N_{\xi}(x)
        &=
        \big \vert \xi(x) \big \vert.
    \end{align}
\end{definition}
The final key ingredient of the main result of Ref.~\cite{farago2016roughly}, that I will need for this paper is paddability, presented in Def.~\ref{padDef}.
\begin{definition}
    \label{padDef}
    A language, $\mathcal{L} \subseteq \Sigma^*$, is \underline{paddable} if and only if there exists two polynomial time computable functions:
\begin{enumerate}
        \item $\textit{Pad}: \Sigma^* \times \Sigma^* \longrightarrow \Sigma^*$,
        \item $\textit{Dec}:  \Sigma^* \longrightarrow \Sigma^*$,
\end{enumerate}
    such that, $\forall x, y \in \Sigma^*$:
    \begin{enumerate}
        \item $\textit{Pad}(x, y) \in \mathcal{L} \iff x \in \mathcal{L}$,
        \item $\textit{Dec}(\textit{Pad}(x, y)) = y$.
    \end{enumerate}
\end{definition}
I am now ready to present the main result of Ref.~\cite{farago2016roughly}, which will be vital for the results of the current paper.
\begin{theorem}{(Theorem 1 in Ref.~\cite{farago2016roughly})}
\label{faragoRoughMain}
    For any paddable language, $\mathcal{L}$, there exists a polynomial time algorithm, called a RoughP algorithm, $\mathcal{P}: \Sigma^* \longrightarrow \{ \textit{Accept }, \textit{ Reject}, \perp \}$, that either correctly decides $\mathcal{L}$ or returns a ``do not know" symbol, $\perp$. Additionally, there exists a P-isomorphism, $\phi: \Sigma^* \rightarrow \Sigma^*$, such that:
        $\forall n \in \mathbb{N}, \forall x \in \Sigma^*$, the fraction of $\big \{ x \in \Sigma^* \text{ } \vert \text{ } \phi(x) \in \Sigma^n \big \}$ such that $\mathcal{P}(x) = \perp$ is at most $2^{-n/2}$.
\end{theorem}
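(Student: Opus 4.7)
The plan is to exploit paddability directly: since $\textit{Pad}(x, y) \in \mathcal{L}$ if and only if $x \in \mathcal{L}$, the $\mathcal{L}$-membership of any string of the form $\textit{Pad}(x, y)$ with short core $x$ is determined entirely by $x$. I would construct $\phi$ so that, within each $n$-fiber $\{z \in \Sigma^* : \phi(z) \in \Sigma^n\}$, at least a $(1 - 2^{-n/2})$ fraction consists of such short-core padded strings, and define $\mathcal{P}$ to invert $\phi$, find a short core for the resulting padded form, and return the $\mathcal{L}$-answer determined by that core; the residual inputs in each fiber that do not admit such a decomposition are precisely those on which $\mathcal{P}$ outputs $\perp$.

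First I would carry out the counting: for each target length $n$, the map $(x, y) \mapsto \textit{Pad}(x, y)$ restricted to short cores with $|x| \leq n/2$ is essentially injective -- distinct $y$'s give distinct outputs through $\textit{Dec}$, and after fixing a canonical short $x$ per equivalence class the resulting assignment is one-to-one -- so its length-$n$ image covers all but at most $O(|\Sigma|^{n/2})$ of the length-$n$ strings, which is exactly the $|\Sigma|^{-n/2}$ fraction claimed (and coincides with $2^{-n/2}$ in the binary case). Next I would stitch the fiberwise injections into a single bijection $\phi: \Sigma^* \to \Sigma^*$ by a standard Berman--Hartmanis-style canonical enumeration, ensuring both $\phi$ and $\phi^{-1}$ remain polynomial-time computable. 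The algorithm $\mathcal{P}$ is then: compute $\phi^{-1}(z)$, apply $\textit{Dec}$ to obtain $y$, search for a short $x$ with $\textit{Pad}(x, y) = \phi^{-1}(z)$; if one exists, return the $\mathcal{L}$-answer for $x$, otherwise return $\perp$.

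The main obstacle will be turning the fiberwise injections into a genuine, globally polynomial-time-invertible bijection of $\Sigma^*$ while preserving the $2^{-n/2}$ density bound in every fiber simultaneously -- the canonical conventions (for tie-breaking, for ordering the exceptional strings, for filling holes in one fiber from neighboring fibers) must interlock consistently across all lengths. A secondary subtlety is exactly how $\mathcal{P}$ evaluates the $\mathcal{L}$-answer on the recovered short core in polynomial time: since $|x| \leq n/2$ this might be handled by recursion, by brute force for sufficiently decidable $\mathcal{L}$, or -- for arbitrarily intractable paddable $\mathcal{L}$ -- by tacit non-uniform advice listing the $\mathcal{L}$-answers of short strings, which is essentially what the $2^{-n/2}$ bound is ``paying for''.
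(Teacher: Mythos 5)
Your construction has a fatal gap at its final step: ``return the $\mathcal{L}$-answer for $x$,'' where $x$ is the recovered short core with $\vert x \vert \leq n/2$. For a general paddable language this step simply is not available. Paddable languages can be arbitrarily hard --- indeed undecidable (the halting problem is paddable) --- so there is no procedure, polynomial-time or otherwise, that evaluates membership of even single-symbol cores. None of your three escape hatches works: recursion has no computable base case; brute force presupposes decidability and ignores the time bound entirely; and non-uniform advice is doubly disallowed, both because the theorem asserts a uniform polynomial-time algorithm and because a lookup table for all cores of length up to $n/2$ has size $\Theta(\vert \Sigma \vert^{n/2})$, which is not even polynomial-size advice. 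The $2^{-n/2}$ bound is not ``paying for'' such a table; it is paying for something much cheaper.

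The actual argument (the paper only quotes Farag\'o's theorem; it does not reprove it) avoids evaluating $\mathcal{L}$ on cores altogether. Fix a single $a \in \mathcal{L}$ and a single $b \notin \mathcal{L}$ (if either does not exist, $\mathcal{L}$ is trivial and the theorem is immediate). The sets $A = \{\textit{Pad}(a,y) : y \in \Sigma^*\}$ and $B = \{\textit{Pad}(b,y) : y \in \Sigma^*\}$ are polynomial-time recognizable (given $z$, compute $y = \textit{Dec}(z)$ and test whether $\textit{Pad}(a,y) = z$), are each in polynomial-time bijection with $\Sigma^*$, and satisfy $A \subseteq \mathcal{L}$ and $B \cap \mathcal{L} = \emptyset$ by the padding property --- so only \emph{two} hard-coded answers are ever needed. $\mathcal{P}$ accepts on $A$, rejects on $B$, and outputs $\perp$ elsewhere; the entire technical content is then the Berman--Hartmanis-style construction of $\phi$ that squeezes $\Sigma^* \setminus (A \cup B)$ into a $2^{-n/2}$ fraction of each ball, which is the part you correctly flagged as delicate. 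Separately, your counting claim that the length-$n$ image of $\textit{Pad}$ on short cores covers all but $O(\vert \Sigma \vert^{n/2})$ of $\Sigma^n$ is unjustified and false in general: $\textit{Pad}$ need not come anywhere near covering any length slice (its outputs could all have, say, even length), so the density must be manufactured entirely by the isomorphism $\phi$, not inherited from $\textit{Pad}$.
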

Throughout the rest of this paper $\phi: \Sigma^* \rightarrow \Sigma^*$ will continue to represent the P-isomorphism it does in Theorem~\ref{faragoRoughMain}, as it does in Def.~\ref{def:RoughP}.
\begin{definition}{(Def.~3 in Ref.~\cite{farago2016roughly})}
    \label{def:RoughP}
    Let,\\
    $\bullet$ $\Sigma$ be an alphabet with $\vert \Sigma \vert \geq 2$,\\
    $\bullet$ $\mathcal{L} \subseteq \Sigma^*$
be a language.\\
I say that $\mathcal{L} \in$ \underline{RoughP}, if there exists a P-isomorphic encoding, $\phi$, and a polynomial time algorithm,
$\mathcal{P}$ : $\Sigma^* \rightarrow$ \{Accept, Reject, $\perp$\}, such that the following hold:
\begin{enumerate}
    \item $\mathcal{P}$ correctly decides $\mathcal{L}$, as an errorless heuristic. That is, it never outputs a wrong decision:
if $\mathcal{P}$ accepts a string $x \in \Sigma^*$, then $x$  $\in \mathcal{L}$ always holds, and if $\mathcal{P}$ rejects $x \in \Sigma^*$, then $x \not \in \mathcal{L}$ always holds.
\item Besides Accept/Reject, $\mathcal{P}$ may output the symbol, $\perp$, meaning it is unable to decide if the input is in the language.
This can occur, however, only for at most an exponentially small fraction of strings. I.e. there is a constant $c$ (where $0 \leq c < 1$) such that: $\forall n \in \mathbb{N}$,
\begin{align}
    \label{eqn:RoughPDefiningEquation}
    \dfrac{\vert \mathcal{B}_n^{\phi} \cap \{ x \in \Sigma^* \text{ }\vert\text{ } \mathcal{P}(x) = \perp\} \vert}{\vert \mathcal{B}_n^{\phi} \vert} \leq c^n,
\end{align}
where \underline{$\mathcal{B}_n^{\phi}$} is defined as:
    $\mathcal{B}_n^{\phi}~=~\big \{ x \in \Sigma^* \text{ } \vert \text{ } N_{\phi}(x) = n \big\}~=~ \mathcal{S}^{N_{\phi}}_n$.
I.e. it is the pre-image of the ``ball" of radius $n \in \mathbb{N}$ in the image of $\phi$.
\end{enumerate}
\end{definition}
I also introduce another definition, that uses the same notation as immediately above.
\begin{definition}
    \label{PhiBalanced}
    A language is \underline{not-anywhere-exponentially-unbalanced} if there exists some polynomial, $\textit{Poly}: \mathbb{N} \rightarrow \mathbb{R}$, such that $\forall n \in \mathbb{N}$, neither the fraction of $\mathcal{B}_n^{\phi}$ (which is exactly the same as in Def.~\ref{def:RoughP}) that is in the language (and $\mathcal{P}$ decides correctly) nor the fraction not in the language (and $\mathcal{P}$ decides correctly) are less than $\big( \textit{Poly}(n) \big)^{-1}$, with the additional condition that $\textit{Poly}(n) \cdot \big( 1 / \sqrt{2} \big)^n$ is monotonically decreasing.
\end{definition}
Not-anywhere-exponentially-unbalanced languages are most simply viewed -- for the purposes of this paper -- as a subset of paddable languages. Paddable languages are extremely common and I do not expect that not-anywhere-exponentially-unbalanced are rare within the paddable languages.

\subsubsection{Main Result}

As mentioned in Theorem~\ref{faragoRoughMain}, Ref.~\cite{farago2016roughly} proved that all paddable languages have a RoughP algorithm, $\mathcal{P}(x)$, and from this RoughP algorithm I define the discriminator as in Def.~\ref{discrimDef}, which exists for all paddable languages.
\begin{definition}
\label{discrimDef}
The \underline{discriminator}, $\Q: \Sigma^* \rightarrow \{ +1, -1 \}$, for a given paddable language with a RoughP algorithm, $\mathcal{P}: \Sigma^* \rightarrow \{ \textit{Accept }, \textit{ Reject}, \perp \}$, is defined by:
    \begin{align}
        \Q(x)=
\begin{cases}
            +1, & \textit{if } \mathcal{P}(x) = \textit{Acc}\\
    \mathfrak{Q}^{\prime}(\phi(x)), & \textit{if } \mathcal{P}(x) = \perp\\
-1, & \textit{if } \mathcal{P}(x) = \textit{Rej}
\end{cases},
\end{align}
where $\phi$ is as in Theorem~\ref{faragoRoughMain} and $\mathfrak{Q}^{\prime}$ is as in Lemma~\ref{devidingLemma} (in Appendix~\ref{app:ProofOfLemma}) but equally divides the $x \in \Sigma^*$ such that $\mathcal{P}(x) = \perp$ equally between returning $+1$ and $-1$ (as shown in Lemma~\ref{devidingLemma}). As $\mathcal{P}$ and $\phi$ are specific to a given problem/language so is $\Q$.
\end{definition}
I am then ready for the main result of this paper, which is Theorem~\ref{mainResultTheorem}.
\begin{theorem}
\label{mainResultTheorem}
    Any paddable not-anywhere-exponentially-unbalanced language over an even-sized alphabet exhibits a phase transition.
\end{theorem}
\begin{proof} 
    To show a phase transition exists, I first must define the parameter I aim to prove induces a phase transition. The particular parameter I use will be referred to as the canonical parameter, introduced in Def.~\ref{def:canonParam}.
    \begin{definition}
        \label{def:canonParam}
        The \underline{canonical parameter}, $\Gamma: \Sigma^* \rightarrow \mathbb{R}$, for an input, $x \in \Sigma^*$, is defined (relative to a given paddable not-anywhere-exponentially-unbalanced language over an even-sized alphabet) as:
        \begin{align}
            \label{tauDef}
            \Gamma(x) = \Q(x) \bigg \vert \sqrt{N_{\phi}(x)} \bigg \vert,
        \end{align}
        where,\\
        $\bullet$ $\Q: \Sigma^* \rightarrow \{ 1, -1 \}$ is the discriminator for the relevant problem (as in Def.~\ref{discrimDef}).\\
        $\bullet$ $\phi: \Sigma^* \rightarrow \Sigma^*$ is the P-isomorphism shown to exist in Theorem~\ref{faragoRoughMain}, for the specific paddable language being considered.\\
        $\bullet$ $N_{\phi}: \Sigma^* \rightarrow \mathbb{N}$ is 
        the P-isomorphism output size of $\phi$ (as in Def.~\ref{pIsoDef}).\\
        Finally, I add the proviso that if $N_{\phi} = 0$, the canonical parameter is not defined~\footnote{This corresponds the the single case where otherwise the canonical parameter is exactly at the threshold value and I omit the study of this single input for simplicity. This does not affect the result more generally, as can be seen in Fig.\ref{fig:ExampleImageProvenSoFar} where no bound at all is placed on the acceptance at the threshold value.}. Hence the single element of $\Sigma^*$ where $\Gamma$ (there is only one as $\phi$ is an isomorphism) is not defined is neglected. This is of little consequence.
    \end{definition}
   Def.~\ref{def:canonParam} requires a discriminator $\Q$ (as in Def.~\ref{discrimDef}) to always exist for paddable problems, which follows from Ref.~\cite[Theorem 1]{farago2016roughly} (and hence also from Theorem~\ref{faragoRoughMain} herein) as the $\mathcal{P}$, corresponding to the required $\Q$, is the RoughP algorithm described therein -- that always exists for paddable languages -- and that $\mathfrak{Q}'$ always exists (and is efficient) for even-sized alphabets.
   
    I note from the above definition of $\Q$, in Def.~\ref{discrimDef}:
    \begin{align}
    \bullet\Q(x) = +1 \text{ if and only if } \Gamma(x) > 0.\\
    \bullet \Q(x) = -1 \text{ if and only if } \Gamma(x) < 0.
     \end{align}
    Which then implies:
    \begin{align}
        \Q(x) = +1 
        &\iff
        \Gamma(x) \geq 0
        \iff H[\Gamma(x)] = 1,
    \end{align}
    where $H: \mathbb{R} \rightarrow \{0, 1\}$ is the Heaviside step function~\cite{Hside}. Using the definition of RoughP, Ref.~\cite{farago2016roughly}, and the definition of $\Q$ (see Def.~\ref{discrimDef}); Lemma~\ref{AccBoundLemma} -- with the aid of Note~\ref{note:QvaluesInAppendix} -- in Appendix \ref{app:ProofOfLemma} shows~\footnote{Relying on $\Sigma$ being even-sized} that:
    \begin{align}
    &\bullet \text{If } \Q(x) = +1, \text{ then } \mathcal{A}[\mathcal{S}_{\tau}^{\Gamma}] \geq 1 - \textit{Poly}(N_{\phi}) c^{N_{\phi}}.\\
    &\bullet \text{If } \Q(x) = -1,  \text{ then } \mathcal{A}[\mathcal{S}_{\tau}^{\Gamma}] \leq \textit{Poly}(N_{\phi}) c^{N_{\phi}}.
    \end{align}
    Where $\textit{Poly}(N_{\phi})$ is a polynomial function (as in Def.~\ref{PhiBalanced}), $N_{\phi}$ denotes the length of any $\phi(x)$ when $ \big \vert \Gamma (x) \big \vert$ (as defined in Def.~\ref{tauDef}) takes the value $\tau \in \mathbb{R}$ (which is consistent due to Eqn.~\ref{nDefinTau}), and $c \in \mathbb{R}^+$ is a constant less than $1 / \sqrt{2}$.
    It is now useful to define, in Def.~\ref{def:acceptanceBounding}, a function that I will spend the rest of this paper examining.
    \begin{definition}
        \label{def:acceptanceBounding}
        An \underline{acceptance bounding function},\\ $\mathcal{A}': \text{POW}\big( \Sigma^* \big) \rightarrow [0,1]$, is defined by~\footnote{Where $\text{POW}$ denotes the powerset of its argument.}:
        \begin{align}
            \mathcal{A}'[\mathcal{S}^{\Gamma}_{\tau}] &> \mathcal{A}[\mathcal{S}^{\Gamma}_{\tau}] \hspace{0.25 cm} \text{ if } \tau > 0,\\
            \mathcal{A}'[\mathcal{S}^{\Gamma}_{\tau}] &< \mathcal{A}[\mathcal{S}^{\Gamma}_{\tau}] \hspace{0.25 cm} \text{ if } \tau < 0,
        \end{align}
        where $\mathcal{A}$ is still as in Def.~\ref{OGA}.
    \end{definition}
    Due to Lemma~\ref{AccBoundLemma} it is useful, for my purposes, to define the acceptance bounding function, $\mathcal{A}'[\mathcal{S}^{\Gamma}_{\tau}]$, as:
    \begin{align}
        \label{eqn:A'Cases}
        \mathcal{A}'[\mathcal{S}^{\Gamma}_{\tau}]
        &=
        \begin{cases}
        1 - \textit{Poly}(N_{\phi}) c^{N_{\phi}}, &\textit{if } \tau > 0\\
          \textit{Poly}(N_{\phi}) c^{N_{\phi}}, &\textit{if } \tau < 0
        \end{cases},
    \end{align}
    where $\textit{Poly}(N_{\phi})$ is the same polynomial function as in Def.~\ref{PhiBalanced}.
    
    Note that the P-isomorphism output size, $N_{\phi}(x)$, is uniquely determined by $\tau$, due to Eqn.~\ref{def:canonParam}, therefore from here on I simply write $N_{\phi}$ to denote $N_{\phi}(x)$ where $x \in \Sigma^*$ such that $\big \vert \Gamma(x) \big \vert = +\tau$, without reference to a specific $x \in \Sigma^*$.
    
    Eqn.~\ref{eqn:A'Cases} can equivalently be expressed as:
    \begin{align}
        \mathcal{A}'[\mathcal{S}^{\Gamma}_{\tau}]
        &=
        H[\tau] \cdot \big( 1 - \textit{Poly}(N_{\phi}) c^{N_{\phi}} \big)\\
        &+ \big( 1 - H[\tau] \big) \cdot \textit{Poly}(N_{\phi})c^{N_{\phi}} \nonumber\\
        \Rightarrow
        \mathcal{A}'[\mathcal{S}^{\Gamma}_{\tau}] 
        &=
        H[\tau] + (1 - 2H[\tau]) \cdot \textit{Poly}(N_{\phi}) c^{N_{\phi}}.
        \label{preTauEX}
    \end{align}
    Then, using Def.~\ref{def:canonParam} and that $N_{\phi}(x)$ is always positive:
    \begin{align}
            \Gamma(x) 
            &=
            \Q(x) \bigg  \vert \sqrt{N_{\phi}(x)} \bigg \vert
            \Rightarrow 
            [ \Gamma(x) ]^2
            =
            N_{\phi}(x).
            \label{nDefinTau}
    \end{align}
    Therefore, using Eqn.~\ref{nDefinTau} to replace $N_{\phi}$  with $\tau^2$ in Eqn.~\ref{preTauEX}:
    \begin{align}
        \mathcal{A}'[\mathcal{S}^{\Gamma}_{\tau}]
        &=
        H[\tau] + (1 - 2H[\tau]) \cdot \textit{Poly}(\tau^2) c^{\tau^2}.
    \end{align}
    So the acceptance fraction is bounded entirely in terms of $\tau$. It is now possible to prove that the canonical parameter meets the requirements of Def.~\ref{PhaseDef}, which shows a phase transition occurs. I address each requirement individually, in order:\\
    $\underline{\bold{Requirement \text{ } 1)}}$ As $\tau \rightarrow \infty$,
    \begin{align}
       \mathcal{A}'[\mathcal{S}^{\Gamma}_{\tau}] &= 1 + (1 - 2) \textit{Poly}(\tau^2)c^{\tau^2}\\
        &=
        1 - \textit{Poly}(\tau^2)c^{\tau^2}
        \longrightarrow
        \nonumber
        1\\
        \Rightarrow \mathcal{A}[\mathcal{S}^{\Gamma}_{\tau}] &\longrightarrow
        1, \text{ monotonically (in terms of a bound on it)}.
    \end{align}
    $\underline{\bold{Requirement \text{ } 2)}}$ As $\tau \rightarrow -\infty$,
    \begin{align}
        \mathcal{A}'[\mathcal{S}^{\Gamma}_{\tau}] &= 0 + (1 - 0) \cdot \textit{Poly}(\tau^2) c^{\tau^2}
        =
        \textit{Poly}(\tau^2) c^{\tau^2}
        \longrightarrow
        0\\
        \Rightarrow \mathcal{A}[\mathcal{S}^{\Gamma}_{\tau}] &\longrightarrow
        0, \text{ monotonically (in terms of a bound on it)}.
    \end{align}
    $\underline{\bold{Requirement \text{ } 3)}}$ This requirement is the only one that does not concern the acceptance fraction; as such, it does not rely on earlier parts of this manuscript (aside from notation and definitions). 
    
    Let $E \in \mathbb{R}^+$. I can express $\Gamma(x)$ being within $[-E, E]$ as:
    \begin{align}
        -E \leq \Gamma(x) \leq E
         \iff  N_{\phi}(x)\leq E^2.
    \end{align}
    I then need to ask how many inputs have $N_{\phi}(x)$ take a particular value. If I fix the value of $N_{\phi}(x)$ to $A  \in \mathbb{Z}^+$. As $N_{\phi}(x) = \vert \phi (x) \vert$ and $\phi (x)$ is a P-isomorphic encoding (so it is $1-1$), there are $\vert \Sigma \vert^A$ possible inputs (i.e. elements of $\Sigma^*$) that have $N_{\phi}(x) = A$.
    Hence, the number of inputs, $x \in \Sigma^*$, s.t. $-E \leq \Gamma(x) \leq E$ is:
    \begin{align}
    \label{eqn:summationFormula}
        \sum^{E^2}_{j = 0} \bigg( \vert \Sigma \vert^j \bigg)
        &=
        \dfrac{\vert \Sigma \vert^{E^2 + 1} - 1}{\vert \Sigma \vert - 1}.
    \end{align}
    In requirement 3 of Def.~\ref{PhaseDef}, what is actually considered is the number of instances with distance from the threshold value in a specified range. Therefore, to meet this requirement, I look at how many $x \in \Sigma^*$ have $\Gamma(x)$ in the ranges:
    \begin{align}
        E_1 \leq \Gamma(x) \leq E_2 \hspace{0.25cm} \textit{ or } \hspace{0.25cm}
        -E_2 \leq \Gamma(x) \leq - E_1,
    \end{align}
    where $E_1, E_2 \in \mathbb{R}^+$ and $E_2 > E_1$.
    The number of instances (i.e. $x \in \Sigma^*$) such that $\Gamma (x)$ is in these ranges is (using Eqn.~\ref{eqn:summationFormula}):
    \begin{align}
        \sum^{E_2^2}_{j = E_1^2} \bigg( \vert \Sigma \vert^j \bigg)
        =
        \sum^{E_2^2}_{j = 0} \bigg( \vert \Sigma \vert^j \bigg) - 
        \sum^{E_1^2 - 1}_{j = 0} \bigg( \vert \Sigma \vert^j \bigg)
        =
        \dfrac {\vert \Sigma \vert^{E_2^2 + 1} - \vert \Sigma \vert^{E_1^2}}{\vert \Sigma \vert - 1}.
    \end{align}
    I now fix the width of the ranges being considered (i.e. setting $E_2 - E_1$ to a constant value) to be $\delta \in \mathbb{R}^+$ by setting $E_2 = E_1 + \delta$:
    \begin{align}
        \sum^{E_2^2}_{j = E_1^2} \bigg( \vert \Sigma \vert^j \bigg)
        &=
        \vert \Sigma \vert^{E_1^2} \dfrac{\vert \Sigma \vert^{ 2 \delta E_1 + \delta^2 + 1} - 1}{\vert \Sigma \vert - 1}.
    \end{align}
    $\dfrac{\vert \Sigma \vert^{ 2 \delta E_1 + \delta^2 + 1} - 1}{\vert \Sigma \vert - 1}$ is strictly increasing as $E_1$ increases. Therefore, the number of elements of $\Sigma^*$ with corresponding values of $\Gamma(x)$ in $[E_1, E_1 + \delta]$, regardless of the value of $\delta$, is:
    \begin{align}
        o \bigg( \vert \Sigma \vert^{E_1^2} \bigg),
    \end{align}
    where $o \big( \cdot \big)$ denotes that anything it is equated to grows at least as quickly as its argument, in the asymptotic limit (defined more formally in Ref.~\cite{omnicronDef}).

    Therefore, as the value of $\Gamma(x)$ moves away from the threshold value, the number of instances found in $[\Gamma(x), \Gamma(x) + \delta]$ increases exponentially (as $\vert \Sigma \vert > 1$). This may alternatively be expressed as: approaching the threshold value results in the number of instances within a given width (i.e. size of the interval being considered, $\delta$) decreasing exponentially. This can be interpreted as very few inputs having a value of $\Gamma$ close (by any measure) to the threshold value, which provides the sparsity around the transition required in the less formal definition of phase transitions (discussed in Sec.~\ref{defAndPrelim}).
\end{proof}
The phase transitions I have shown to exist have been quite permissive.
For comparison with the standard form on which I focused my discussion, Fig.~\ref{fig:ExampleImageProvenSoFar} shows an example (with $\textit{Poly}(n)$ being a constant) of the bounds which I have proven to exist.
\begin{figure}[h!]
    \centering
\includegraphics[width=0.45\textwidth]{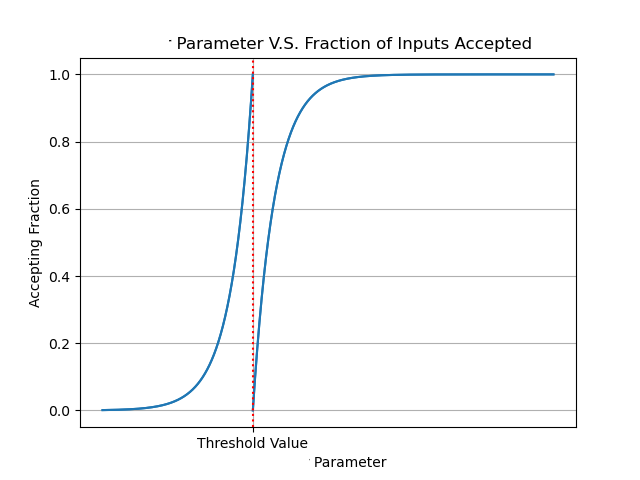}
    \caption{An example (where $\textit{Poly}(n)$ is a constant) of the limits that the acceptance fraction in any paddable decision problem over an even-sized alphabet have been shown to obey. The true acceptance fraction of any such decision problem, if superimposed on this figure would be constrained below the blue line when the parameter is below (in the above image, to the left of) the threshold value and constrained above the blue line when the parameter is above (in the above image, to the right of) the threshold value.}
    \label{fig:ExampleImageProvenSoFar}
\end{figure}

This ``permissive" form of phase transition is very widely applicable and many varying forms of phase transition would fit within this form. However, I note that transitions of the form in Fig.~\ref{fig:Example Image} would easily conform to the requirements of this ``permissive" form of phase transition. Visually, the phase transition from Fig.~\ref{fig:Example Image} can be seen to fall within these bounds in Fig.~\ref{fig:ExampleImageProvenSoFar}.

\section{Discussion}
The principal result of this paper has been to explain why phase transitions in decision problems are so common. In tying the presence of phase transitions to paddability, I have -- at least partially -- answered this question: paddability is extremely common in decision problems and is exhibited by almost all known practical problems. Likewise, there is no reason to suspect that not-anywhere-exponentially-unbalanced languages are rare. Hence the ubiquity of paddability explains the ubiquity of phase transitions.

This result has been achieved via a theoretical approach to establishing the presence of phase transitions in an extremely large set of decision problems -- the paddable not-anywhere-exponentially-unbalanced languages over even-sized alphabets. I note that phase transitions constructed by the methods herein are not necessarily the only phase transitions in a given decision problem. There are \emph{often} many restrictions that can be applied to a specific decision problem that makes it easy~\cite{jackson2024extensivelypbiimmunepromisebqpcompletelanguages}, or at least drastically easier, and these can be seen as phase transitions of a different form.
 
A potential improvement to the result of this paper is relatively obvious: the results of this paper are only applicable to languages over even-sized alphabets, so it would be useful and interesting to generalize the results herein to languages over any alphabet.

\section{Acknowledgements}
This work was supported, in part, 
by a EPSRC IAA grant (G.PXAD.0702.EXP) and the UKRI ExCALIBUR project QEVEC (EP/W00772X/2).

\bibliography{References}

\newpage
\onecolumngrid
\appendix 

\section{Proof of Lemma \ref{AccBoundLemma}}
\label{app:ProofOfLemma}
This appendix is devoted to the proof of Lemma~\ref{AccBoundLemma}. I first prove two preparatory lemmas: Lemma~\ref{equalSize} and Lemma~\ref{devidingLemma}.  Lemma~\ref{equalSize} is used in the proof of Lemma~\ref{devidingLemma}, which is then used in the proof of Lemma~\ref{AccBoundLemma}. Lemma~\ref{AccBoundLemma} also makes use of Lemma~\ref{lem:PAnalysis}, which appears immediately after it.
\subsection{Preparatory Lemmas: Lemma~\ref{equalSize} and Lemma~\ref{devidingLemma}}
\begin{lemma}
    \label{equalSize}
    For any even-sized alphabet, $\Sigma$, and $n \in \mathbb{N}$:
    \begin{align}
        \big \vert \big\{ x \in \Sigma^n \text{ } \vert \text{ } \omega \big( x \big) \textit{ is even} \big\} \big \vert
        &=
        \big \vert \big\{ x \in \Sigma^n \text{ } \vert \text{ } \omega \big( x \big) \textit{ is odd} \big\} \big \vert,
    \end{align}
    where $\omega: \Sigma^* \rightarrow \mathbb{N}$ is the weight function (i.e. the sum of each digit in the input string~\footnote{If the alphabet of the input string is not a set of sequential numbers, a bijection between the alphabet and $\mathbb{N}^{\leq \vert \Sigma \vert}$ can be used to enable the weight function}).
\end{lemma}
\begin{proof}
    Assume, WLOG, $\Sigma = \{ 1, 2, ..., \vert \Sigma \vert \}$, where $\vert \Sigma \vert$ is even.
    Letting $\big[ \cdot \big]_j: \Sigma^* \rightarrow \mathbb{N}$ denote the $j$th digit (note that I am using 1-based indexing) in its argument. I define, in Def.~\ref{def:omegaDef}, the weight function, $\omega: \Sigma^* \rightarrow \mathbb{N}$ that will feature prominently in this proof.
    \begin{definition}
        \label{def:omegaDef}
        Define $\omega: \Sigma^* \rightarrow \mathbb{N}$ by, $\forall x \in \Sigma^*$,
        \begin{align}
            \omega(x) &=  \sum_{j = 1}^{\vert x \vert} \bigg( \big[ x \big]_j \bigg).
        \end{align}
    \end{definition}
    Consider $x, y \in \Sigma^*$ and let $\circ: \Sigma^* \times \Sigma^* \rightarrow \Sigma^*$ denote string concatenation. Then,
    \begin{align}
    \label{eqn:digitsOfStringConcaternation}
    \big[ x \circ y \big]_j
        =
        \begin{cases}
            [x]_{j}, & \textit{if } j \leq \vert x \vert\\
            [y]_{j - \vert x \vert}, & \textit{if } j > \vert x \vert
        \end{cases}.
    \end{align}
    Applying the definition of $\omega: \Sigma^* \rightarrow \mathbb{N}$ to $x \circ y \in \Sigma^*$ and using Eqn.~\ref{eqn:digitsOfStringConcaternation}:
    \begin{align}
        \omega(x \circ y)
        &=
        \sum_{j = 1}^{\vert x \circ y \vert} \bigg( \big[ x \circ y \big]_j \bigg)
        =
        \sum_{j = 1}^{\vert x \vert} \bigg( \big[ x \big]_j \bigg)
        +
        \sum_{j = \vert x \vert + 1}^{\vert x \circ y \vert} \bigg( \big[ y \big]_{j - \vert x \vert} \bigg)
        =
        \sum_{j = 1}^{\vert x \vert} \bigg( \big[ x \circ y \big]_j \bigg)
        +
        \sum_{j = \vert x \vert + 1}^{\vert x \circ y \vert} \bigg( \big[ x \circ y \big]_j \bigg)
        =
        \omega(x)
        +
        \sum_{k =  1}^{\vert y \vert} \bigg( \big[ y \big]_{k} \bigg)
        =
        \label{omegaSumResult}
        \omega(x)
        +
        \omega(y).
    \end{align}
    Finally, define $\#_{E}(\cdot): \mathbb{N} \rightarrow  \mathbb{N}$ as the number of elements of $\Sigma^*$ of given length (which is the argument) such that the corresponding weight function is even, and similarly $\#_{O}(\cdot): \mathbb{N} \rightarrow  \mathbb{N}$ as the number of elements of $\Sigma^{*}$ of given length (its argument) such that the corresponding weight function is odd. Using Eqn.~\ref{omegaSumResult}: $\forall n, m \in \mathbb{N}$,
    \begin{align}
        \#_E(n + m)
        =
        \sum_{j \in \{E, O\}} \bigg( \#_j(n) \#_j(m) \bigg)
        \Rightarrow
        \#_E(n + 1)
        =
        \dfrac{\vert \Sigma \vert}{2} \sum_{j \in \{E, O\}} \bigg( \#_j(n) \bigg),
    \end{align}
    where I have used that $\#_E(1) = \#_O(1) = \dfrac{\vert \Sigma \vert}{2}$ (as $\vert \Sigma \vert$ is assumed to be even). A similar argument yields: $
        \#_O(n + 1)
        =
        \dfrac{\vert \Sigma \vert}{2} \sum_{j \in \{E, O\}} \bigg( \#_j(n) \bigg)$.
    Using the formulas for $\#_E(n + 1)$ and $\#_O(n + 1)$ (specifically, how they are identical), and that $\#_E(1) = \#_O(1)$ (as, for an even-sized alphabet like $\Sigma$, there are an equal number of odd and even weight strings of length one), by induction: $\forall q \in \mathbb{N}$,
    \begin{align}
        \#_E(q) = \#_O(q),
    \end{align} exactly as Lemma~\ref{equalSize} claims, albeit expressed more concisely.
\end{proof}

\begin{lemma}
    \label{devidingLemma}
    For any even $n \in \mathbb{N}$ and set $ \mathcal{q}_n =  \{ z \circ z \text{ } \vert \text{ } z \in \Sigma^{n/2} \}$, where $\Sigma$ is any even-sized alphabet, there exists a linear-time deterministic algorithm, $\mathfrak{Q}^{\prime}: \mathcal{q}_n \rightarrow \{ +1, -1 \}$, such that the pre-image of every element in $\{ +1, -1 \}$, under $\mathfrak{Q}'$ is of equal size. Additionally, the same $\mathfrak{Q}^{\prime}$ can be used for any even value of $n \in \mathbb{N}$. 
\end{lemma}
\begin{proof}
    Given an input $z \circ z $ such that $z \in \Sigma^{n/2}$, $z$ can be extracted in linear time.
    Then $\omega (z)$ can then be calculated in linear (in $\vert z \vert$) time. Via Lemma \ref{equalSize}, calculating $\omega (z)$ divides $\mathcal{q}_n$ into two equinumerous subsets: those where $\omega (z)$ is even and those where it is odd.
    Define $\mathfrak{Q}^{\prime}: \cup_{\text{even } n \in \mathbb{N}} \big( \mathcal{q}_n \big) \rightarrow \{ +1, -1 \}$ as computing -- on input $z \circ z$, where $z \in \Sigma^*$ -- $\omega(z)$ and returning $+1$ if the parity is odd and $-1$ otherwise. This equally divides the inputs (i.e. $\Sigma^{n/2}$) as required. It can also be seen that this same $\mathfrak{Q}^{\prime}: \cup_{\text{even } n \in \mathbb{N}} \big( \mathcal{q}_n \big) \rightarrow \{ +1, -1 \}$ functions as required (and in linear time) regardless of the value of $n \in \mathbb{N}$.
\end{proof}

\begin{Note}
    The $\mathfrak{Q}^{\prime}$ in Lemma~\ref{devidingLemma} is exactly the $\mathfrak{Q}^{\prime}$ in the definition of the discriminator (see Def.~\ref{discrimDef}). It only needs to be defined on $\cup_{n = 1}^{\infty} \big( \mathcal{q}_n \big)$ -- for my purposes -- as $\cup_{n = 1}^{\infty} \big( \mathcal{q}_n \big)$ is exactly the set of inputs that the discriminator -- as in Def.~\ref{discrimDef} -- passes to $\mathfrak{Q}^{\prime}$. This is due to $\cup_{n = 1}^{\infty} \big( \mathcal{q}_n \big)$ being exactly the set of inputs for which $\mathcal{P}$ -- as specified in Theorem~\ref{faragoRoughMain} -- returns $\perp$.
    I also note that we do not care about what $\mathfrak{Q}^{\prime}$ does outside of $\cup_{n = 1}^{\infty} \big( \mathcal{q}_n \big)$.
\end{Note}

\subsection{Presentation of Lemma~\ref{AccBoundLemma} and its Proof}
\begin{lemma}
\label{AccBoundLemma}
    For any paddable not-anywhere-exponentially-unbalanced decision problem over an even-sized alphabet, $\Sigma$, and its corresponding canonical parameter, $\Gamma$ (as defined in Eqn.~\ref{tauDef} and which exists for any such problem, due to Theorem~\ref{faragoRoughMain}), the accepting fraction of that decision problem (as defined in Def.~\ref{OGA}) over $\mathcal{S}_{\tau}^{\Gamma}$, $\mathcal{A}[\mathcal{S}_{\tau}^{\Gamma}]$, is bounded as:
    \begin{align}
       \textit{If } \tau > 0, \hspace{0.7em} \mathcal{A}[\mathcal{S}_{\tau}^{\Gamma}] &\geq 1 - \textit{Poly}(N_{\phi}) c^{N_{\phi}}.\\
         \textit{If } \tau < 0, \hspace{0.7em} \mathcal{A}[\mathcal{S}_{\tau}^{\Gamma}] &\leq \textit{Poly}(N_{\phi}) c^{N_{\phi}}.
    \end{align}
    Where $\textit{Poly}: \mathbb{R}^+ \rightarrow \mathbb{R}^+$ is a polynomial function as in Def.~\ref{PhiBalanced}, $N_{\phi}$ is as defined in Def.~\ref{pIsoDef} (but drops its argument with the understanding that it is the isomorphism length corresponding to $\Gamma$ taking the value $\tau \in \mathbb{R}$~\footnote{Where $\tau$ is as prescribed by the rest of the equation $N_{\phi}$ is in.}), and $\mathcal{S}^{\Gamma}_{\tau}$ retains its meaning from Def.~\ref{def:parameterSlice}.
\end{lemma}
\begin{proof}
I begin with a definition I will use throughout this proof, Def.~\ref{def:bracketsInApp}.
    \begin{definition}
    \label{def:bracketsInApp}
    For any map, $f$, defined on a set, $S$,~\footnote{Which need not be the domain of $f$ but must only be a subset of it.} define: $\underline{\big[ S \big]^{\textit{f}}_{\textit{val}}}
            =
            \big\{ x \in S \text{ } \vert \text{ } f(x) = \textit{val}  \big\}$,
        where $\textit{val}$ can be any element of $f$'s codomain.
Note that for any map, $f$, set, $S \subseteq \Sigma^*$, and value, $\textit{val}$; $\big[ S \big]^{\textit{f}}_{\textit{val}}$ can be expressed as the intersection of two subsets of $\Sigma^*$:
\begin{align}
    \label{eqn:altFormBrackets}
    \big[ S \big]^{\textit{f}}_{\textit{val}}
    &=
    S \cap \big\{ x \in \Sigma^* \text{ } \vert \text{ } f(x) = \textit{val} \big\}.
\end{align}
\end{definition}
    Using Def.~\ref{def:bracketsInApp}, I can, $\forall n \in \mathbb{N}$, split $\mathcal{B}^{\phi}_n$ (as defined in Def.~\ref{def:RoughP}) into $\big[ \mathcal{B}^{\phi}_n \big]^{\mathcal{Q}}_{+1}$ and $\big[ \mathcal{B}^{\phi}_n \big]^{\mathcal{Q}}_{-1}$ (as $\mathcal{Q}$ -- as defined in Def.~\ref{discrimDef} -- always outputs exactly one of only two options: $+1$ or $-1$) with the properties: $\forall n \in \mathbb{N}$,
    \begin{align}
        \big[ \mathcal{B}^{\phi}_n \big]^{\mathcal{Q}}_{+1} \cap \big[ \mathcal{B}^{\phi}_n \big]^{\mathcal{Q}}_{-1} = \emptyset
        \text{ and }
        \big[ \mathcal{B}^{\phi}_n \big]^{\mathcal{Q}}_{+1} \cup \big[ \mathcal{B}^{\phi}_n \big]^{\mathcal{Q}}_{-1} = \mathcal{B}^{\phi}_n.
    \end{align}
    Considering the cases by which $\mathcal{Q}$ is defined, in Def.~\ref{discrimDef}, and letting $\circ$ denote the composition operator on functions, I can re-express each of $\big[ \mathcal{B}^{\phi}_n \big]^{\mathcal{Q}}_{+1}$ and $\big[ \mathcal{B}^{\phi}_n \big]^{\mathcal{Q}}_{-1}$ as: $\forall n \in \mathbb{N}$, $\forall p \in \{+1, -1\}$,
    \begin{align}
        \label{firstSDecomp}
        \big[ \mathcal{B}^{\phi}_n \big]^{\mathcal{Q}}_{p}
        =
        \bigg( \big[ \mathcal{B}^{\phi}_n \big]^{\mathcal{P}}_{\perp} \cap \big[ \mathcal{B}^{\phi}_n \big]^{\mathfrak{Q}^{\prime} \circ \phi}_{p} \bigg)
        \cup
        \big[ \mathcal{B}^{\phi}_n \big]^{\mathcal{P}}_{\psi (p) }, \hspace{0.4cm}
    \text{ where }
        \psi(p)
        =
        \begin{cases}
            \textit{Acc}, &\textit{if } p = +1\\
            \textit{Rej}, &\textit{if } p = -1
        \end{cases}.
    \end{align}
    As the sets on either side of the logical union, in Eqn.~\ref{firstSDecomp}, are disjoint $\big($as $ \big[ \mathcal{B}^{\phi}_n \big]^{\mathcal{P}}_{\perp} \cap \big[ \mathcal{B}^{\phi}_n \big]^{\mathfrak{Q}^{\prime} \circ \phi}_{p} \subseteq \big[ \mathcal{B}^{\phi}_n \big]^{\mathcal{P}}_{\perp}$ and $\big[ \mathcal{B}^{\phi}_n \big]^{\mathcal{P}}_{\psi (p) } \cap \big[ \mathcal{B}^{\phi}_n \big]^{\mathcal{P}}_{\perp} = \emptyset\big)$:
    \begin{align}
    \label{Eqn:PEvalu}
        \big \vert \big[ \mathcal{B}^{\phi}_n \big]^{\mathcal{Q}}_{p} \big \vert 
        &=
        \big \vert \big[ \mathcal{B}^{\phi}_n \big]^{\mathcal{P}}_{\perp} \cap \big[ \mathcal{B}^{\phi}_n \big]^{\mathfrak{Q}^{\prime} \circ \phi}_{p} \big \vert
        +
        \big \vert \big[ \mathcal{B}^{\phi}_n \big]^{\mathcal{P}}_{\psi (p) } \big \vert.
    \end{align}
    An analysis of $\mathcal{P}$ -- in Lemma~\ref{lem:PAnalysis} -- allows me to evaluate: $\forall n \in \mathbb{N}$, $\forall p \in \{+1, -1\}$,
    \begin{align}
        \label{eqn:intermediateSizeEvalOne}
        \big \vert \big[ \mathcal{B}^{\phi}_n \big]^{\mathcal{P}}_{\psi (p) } \big \vert
        &=
        \dfrac{\vert \mathcal{B}^{\phi}_n \vert}{2}
        -
        \dfrac{(1 - p) \cdot \vert [\mathcal{B}^{\phi}_n ]_{\perp}^{\mathcal{P}} \vert}{2},
    \end{align}
    and as shown in Lemma~\ref{devidingLemma} (which depends on $\vert \Sigma \vert$ being even), $\mathfrak{Q}^{\prime} \circ \phi$, by construction, splits $\big[ \mathcal{B}^{\phi}_n \big]^{\mathcal{P}}_{\perp}$ evenly between $\big[ \mathcal{B}^{\phi}_n \big]^{\mathcal{Q}}_{+1}$ and $\big[ \mathcal{B}^{\phi}_n \big]^{\mathcal{Q}}_{-1}$, therefore: $\forall n \in \mathbb{N}$, $\forall p \in \{+1, -1\}$,
    \begin{align}
        \label{eqn:intermediateSizeEvalTwo}
        \big \vert \big[ \mathcal{B}^{\phi}_n \big]^{\mathcal{P}}_{\perp} \cap \big[ \mathcal{B}^{\phi}_n \big]^{\mathfrak{Q}^{\prime} \circ \phi}_{p} \big \vert
        &=
        \dfrac{\vert \big[ \mathcal{B}^{\phi}_n \big]^{\mathcal{P}}_{\perp} \vert}{2}.
    \end{align}
    I can then return to Eqn.~\ref{Eqn:PEvalu} and -- using Eqn.~\ref{eqn:intermediateSizeEvalOne} and Eqn.~\ref{eqn:intermediateSizeEvalTwo} -- re-express it as: $\forall n \in \mathbb{N}$, $\forall p \in \{+1, -1\}$,
    \begin{align}
        \label{eqn:NotGonnaEval}
        \big \vert \big[ \mathcal{B}^{\phi}_n \big]^{\mathcal{Q}}_{p} \big \vert 
        &=
        \dfrac{\vert \big[ \mathcal{B}^{\phi}_n \big]^{\mathcal{P}}_{\perp} \vert}{2}
        +
        \dfrac{\vert \mathcal{B}^{\phi}_n \vert}{2}
        -
        \dfrac{(1 - p) \cdot \vert [\mathcal{B}^{\phi}_n ]_{\perp}^{\mathcal{P}} \vert}{2}
        =
        \dfrac{\vert \mathcal{B}^{\phi}_n \vert}{2}
        +
        p \dfrac{ \vert [\mathcal{B}^{\phi}_n ]_{\perp}^{\mathcal{P}} \vert}{2}.
    \end{align}
    I do not need to evaluate the size of the sets ($\mathcal{B}^{\phi}_n$ and $[\mathcal{B}^{\phi}_n ]_{\perp}^{\mathcal{P}}$) in the right hand side of Eqn.~\ref{eqn:NotGonnaEval}. Instead I define $\mathcal{F}_p^{(\phi, n)}$ as the fraction of $\big[ \mathcal{B}^{\phi}_n \big]^Q_p$ where  $\mathcal{P}$ returns $\perp$ $ \bigg( \text{i.e. } \mathcal{F}_p^{(\phi, n)}
        =
        \dfrac{\big \vert \big[ \mathcal{B}^{\phi}_n \big]^{\mathcal{P}}_{\perp} \cap \big[ \mathcal{B}^{\phi}_n \big]^{\mathfrak{Q}^{\prime} \circ \phi}_{p} \big \vert}{\big \vert \big[ \mathcal{B}^{\phi}_n \big]^{\mathcal{Q}}_{p} \big \vert} \bigg)$. $\mathcal{F}_p^{(\phi, n)}$ is an upper bound on the fraction of $[ \mathcal{B}^{\phi}_n ]^Q_p$ that $Q$ decides incorrectly (if $+1$ and $-1$ are interpreted as Accept and Reject, respectively~\footnote{Where the ``correctness'' is defined relative to membership of the relevant language.}). Eqn.~\ref{eqn:intermediateSizeEvalTwo} and Eqn.~\ref{eqn:NotGonnaEval} then allow me to evaluate $\mathcal{F}_p^{(\phi, n)}$ as: $\forall n \in \mathbb{N}$, $\forall p \in \{+1, -1\}$,
    \begin{align}
    \label{Eqn:FFiguredOutABit}
        \mathcal{F}_p^{(\phi, n)}
        =
        \dfrac{\big \vert \big[ \mathcal{B}^{\phi}_n \big]^{\mathcal{P}}_{\perp} \cap \big[ \mathcal{B}^{\phi}_n \big]^{\mathfrak{Q}^{\prime} \circ \phi}_{p} \big \vert}{\big \vert \big[ \mathcal{B}^{\phi}_n \big]^{\mathcal{Q}}_{p} \big \vert}
        =
        \dfrac{ \vert [\mathcal{B}^{\phi}_n ]_{\perp}^{\mathcal{P}} \vert}{ 2 \big \vert \big[ \mathcal{B}^{\phi}_n \big]^{\mathcal{Q}}_{p} \big \vert}
        =
    \dfrac{ \vert [\mathcal{B}^{\phi}_n ]_{\perp}^{\mathcal{P}} \vert}{\big \vert \mathcal{B}^{\phi}_n \big \vert
        +
        p \big \vert [\mathcal{B}^{\phi}_n ]_{\perp}^{\mathcal{P}} \big \vert}.
    \end{align}
    In a brief aside -- that will allow me to upper bound $\mathcal{F}_p^{(\phi, n)}$ -- consider the derivative of $g: \mathbb{R} \rightarrow \mathbb{R}$, defined by $g(x) = \frac{x}{\beta + p x}$ (where $\beta \in \mathbb{R}^+$~\footnote{Note that I consider $\mathbb{R}^+$ to not include zero.}, $p \in \{ -1, +1 \}$, and assuming $x < \beta$), to show $g$ is strictly increasing on $\mathbb{R}$: $\forall \beta \in \mathbb{R}^+, \forall p \in \{ + 1, -1\}$,
    \begin{align}
        \dfrac{\partial g}{\partial x}
        =
        \dfrac{\beta}{(\beta + p x)^2}
        \Rightarrow \forall x \in \mathbb{R}, \dfrac{\partial g}{\partial x} &> 0
        \label{eqn:showIncreasing}
        \Rightarrow
        \forall x \in \mathbb{R}, \forall h \in \mathbb{R}^+, \dfrac{x + h}{\beta + p(x + h)} > \dfrac{x}{\beta + px}.
    \end{align}
Returning to Eqn.~\ref{Eqn:FFiguredOutABit}, I use Eqn.~\ref{eqn:showIncreasing} (which is valid as $\vert \mathcal{B}^{\phi}_n \vert \in \mathbb{R}^+$ and $\vert \mathcal{B}^{\phi}_n \vert > \vert [\mathcal{B}^{\phi}_n ]_{\perp}^{\mathcal{P}} \vert$) to upper-bound $\mathcal{F}_p^{(\phi, n)}$ by upper-bounding $\vert [\mathcal{B}^{\phi}_n ]_{\perp}^{\mathcal{P}} \vert$ using the main result of Ref.~\cite{farago2016roughly} (and Eqn.~\ref{eqn:RoughPDefiningEquation} in Def.~\ref{def:RoughP}): $\forall n \in \mathbb{N}$, $\forall p \in \{ -1, +1\}$,
    \begin{align}
        \label{eqn:FinalForF}
        \forall n, \big \vert \mathcal{B}^{\phi}_n \big \vert \in \mathbb{N},
        \dfrac{ c^{n} \big \vert \mathcal{B}^{\phi}_n \big \vert}{\big \vert \mathcal{B}^{\phi}_n \big \vert
        -
        c^{n} \vert \mathcal{B}^{\phi}_n \vert }
        \geq
        \dfrac{ c^{n} \big \vert \mathcal{B}^{\phi}_n \big \vert }{\big \vert \mathcal{B}^{\phi}_n \big \vert
        +
        c^{n} \big \vert \mathcal{B}^{\phi}_n \big \vert }
        \Rightarrow
         \forall n, \big \vert \mathcal{B}^{\phi}_n \big \vert \in \mathbb{N},
        \mathcal{F}_p^{(\phi, n)} 
        &\leq
        \dfrac{ c^{n} \vert \mathcal{B}^{\phi}_n \vert }{\big \vert \mathcal{B}^{\phi}_n \big \vert
        -
        c^{n} \big \vert \mathcal{B}^{\phi}_n \big \vert }
        =
        \dfrac{c^{n}}{1
        -
        c^{n}}
        \leq
        \dfrac{\sqrt{2}}{\sqrt{2} - 1}
        c^{n}
        \leq
        \dfrac{7}{2} c^{n},
    \end{align}
where the third inequality in Eqn.~\ref{eqn:FinalForF} follows from $c \leq 1 / \sqrt{2}$, as in Theorem~\ref{faragoRoughMain}.
$\mathcal{F}_p^{(\phi, n)}$ can be interpreted as the fraction, in the worst case, of the set of inputs -- in $[\mathcal{B}^{\phi}_n]^Q_p$ -- for which the discriminator, $\mathcal{Q}$, decides incorrectly (and hence $\mathcal{Q}$ outputs a value on the wrong side of zero, i.e. the returned value of the parameter indicates to Accept when the input isn't in the language or Reject when it is).

Due to Eqn.~\ref{nDefinTau}, if $\tau \in \mathbb{R}$ and $N_{\phi} = N_{\phi}(z)$ (where $z \in \Sigma^*$ is any input such that $\vert \Gamma(z) \vert = + \tau$), then:
\begin{align}
    \mathcal{S}_{+\tau}^{\Gamma} \cup \mathcal{S}_{-\tau}^{\Gamma} = \mathcal{S}_{ \vert \tau \vert}^{\vert \Gamma \vert} = \mathcal{S}^{N_{\phi}}_{N_{\phi}} = \mathcal{B}^{\phi}_{N_{\phi}} \text{ and } \mathcal{S}_{+\tau}^{\Gamma} \cap \mathcal{S}_{-\tau}^{\Gamma} = \emptyset. \hspace{1.5cm}
    \text{ This is as: } \mathcal{S}_{+\tau}^{\Gamma} = \big[ \mathcal{B}^{\phi}_{N_{\phi}} \big]^{\mathcal{Q}}_{+1} \text{ and } \mathcal{S}_{-\tau} ^{\Gamma} = \big[ \mathcal{B}^{\phi}_{N_{\phi}} \big]^{\mathcal{Q}}_{-1}.
\end{align}
 I now consider bounding the true acceptance fraction of $\mathcal{S}_{\tau}^{\Gamma}$, denoted $\mathcal{A}[\mathcal{S}_{\tau}^{\Gamma}]$. As $\vert \tau \vert$ is uniquely determined by $N_{\phi}$ (and \emph{vice versa}, as per Eqn.~\ref{def:canonParam}), $\vert \Gamma (x) \vert = + \tau$ whenever the value of $N_{\phi}$ is fixed. But presupposing the value of $\vert \Gamma (x) \vert = + \tau$, considerations can be split into two distinct cases, depending on the sign of $\tau$. These two cases are intended to approximate Accept and Reject (i.e. if a given element of $\Sigma^*$ is in the language being considered) but are imperfect. This follows from $\mathcal{Q}$ decides which of $\mathcal{S}^{\Gamma}_{+\tau}$ or $\mathcal{S}^{\Gamma}_{-\tau}$ and element of $\mathcal{B}^{\phi}_{N_{\phi}}$ is sorted into (as per Def.~\ref{def:canonParam}) and $\mathcal{Q}$ always agrees with $\mathcal{P}$ (a roughP algorithm and errorless heuristic) when it does not return $\perp$.\\
The immediately subsequent analysis of the two cases mentioned makes use of a new function, defined in Def.~\ref{def:BubbleT}.
\begin{definition}
\label{def:BubbleT}
    Let \underline{$\mathbb{T}: \Sigma^* \rightarrow \big \{ \text{Accept}, \text{Reject} \big \} $} be a function that correctly decides membership of the language being considered.
\end{definition}
The two cases that $\mathcal{B}^{\phi}_{N_{\phi}}$ is split into are based on whether $\tau$ is positive or negative and are:\\

\noindent \underline{\emph{Case One}: $\tau$ $>$ 0} \hspace{0.7cm}
In this case, $\mathcal{Q}$ deciding incorrectly means it has returned Accept (implicitly, i.e. it has returned $+1$) when it should have returned Reject (i.e. should have returned $-1$), hence the ideal acceptance fraction is one (and $\mathcal{F}_p^{(\phi, N_{\phi})}$ is the deviation from it, due to $\mathcal{Q}$ erring (due to $\mathcal{P}$ returning $\bot$), in the worst case).
Assuming the above worst case scenario, $\mathcal{A}[\mathcal{B}^{\phi}_{N_{\phi}}]$ will be one minus the fraction of $\mathcal{B}^{\phi}_{N_{\phi}}$ where $\mathcal{P}$ returns $\perp$. Therefore, if this worst case scenario is not assumed, $\mathcal{A}[\mathcal{S}^{\Gamma}_{\tau}]$ will be greater than or equal to this, i.e.:
\begin{align}
\label{eqn.bound1A}
    \mathcal{A}[\mathcal{S}^{\Gamma}_{\tau}]
    &=
    \dfrac{\big \vert \big[ \mathcal{S}^{\Gamma}_{\tau} \big]^{\mathbb{T}}_{\text{Accept}} \big \vert}{ \big \vert \mathcal{S}^{\Gamma}_{\tau} \big \vert }.
\end{align}
I then consider $\big[ \mathcal{S}^{\Gamma}_{\tau} \big]^{\mathbb{T}}_{\text{Accept}}$ and bound the number of elements it has.
\begin{align}
    \big \vert \big[ \mathcal{S}^{\Gamma}_{\tau} \big]^{\mathbb{T}}_{\text{Accept}} \big \vert
    \geq
    \big \vert \mathcal{S}^{\Gamma}_{\tau} \big \vert 
    &-
    \big \vert \big \{ x \in \mathcal{B}^{\phi}_{N_{\phi}} \text{ } \big \vert \text{ } x \text{ is included in } \mathcal{S}^{\Gamma}_{\tau} \text{ and is not in the language being considered} \big \} \big \vert.
\end{align}
As $\mathcal{P}$ is an errorless heuristic, and all elements in $\big \{ x \in \mathcal{B}^{\phi}_{N_{\phi}} \text{ } \big \vert \text{ } x \text{ is included in } \mathcal{S}^{\Gamma}_{\tau} \text{ and is not in the language being considered} \big \}$ must have been decided incorrectly by $\mathcal{Q}$, $x \in \mathcal{B}^{\phi}_{N_{\phi}}$ can only be misassigned  as it has been if $x \in \big[ \mathcal{B}^{\phi}_{N_{\phi}} \big]^{\mathcal{P}}_{\perp}$. Due to the definition of $\mathcal{F}_p^{(\phi, N_{\phi})}$, $ \big \vert \big[ \mathcal{B}^{\phi}_n \big]^{\mathcal{P}}_{\perp} \big \vert \leq 2 \mathcal{F}_p^{(\phi, N_{\phi})} \big \vert \mathcal{B}^{\phi}_{N_{\phi}} \big \vert$. This then implies that:     
\begin{align}
    &\big \vert \big \{ x \in \mathcal{B}^{\phi}_{N_{\phi}} \text{ } \big \vert \text{ } x \text{ is included in } \mathcal{S}^{\Gamma}_{\tau} \text{ and is not in the language being considered} \big \} \big \vert \leq  2 \mathcal{F}_p^{(\phi, N_{\phi})} \big \vert \mathcal{B}^{\phi}_{N_{\phi}} \big \vert.
\end{align}
Returning to Eqn.~\ref{eqn.bound1A}, the above analysis allows me to derive the bound:
\begin{align}
    \mathcal{A}[\mathcal{S}^{\Gamma}_{\tau}]
    &\geq
    \dfrac{\big \vert \mathcal{S}^{\Gamma}_{\tau} \big \vert  - 2 \mathcal{F}_p^{(\phi, N_{\phi})} \big \vert \mathcal{B}^{\phi}_{N_{\phi}} \big \vert}{ \big \vert \mathcal{S}^{\Gamma}_{\tau} \big \vert }
    =
    1 - 2 \dfrac{\mathcal{F}_p^{(\phi, N_{\phi})} \big \vert \mathcal{B}^{\phi}_{N_{\phi}} \big \vert}{ \big \vert \mathcal{S}^{\Gamma}_{\tau} \big \vert }
\end{align}
Using that $\mathcal{S}_{+\tau}^{\Gamma} \cup \mathcal{S}_{-\tau}^{\Gamma} = \big[ \mathcal{B}^{\phi}_{N_{\phi}}\big]^{\mathcal{Q}}_{+1} \cup  \big[ \mathcal{B}^{\phi}_{N_{\phi}}\big]^{\mathcal{Q}}_{-1} = \mathcal{B}^{\phi}_{N_{\phi}}$ and that the the language being considered is not-anywhere-exponentially-unbalanced (plus the interpretation of $\mathcal{S}_{+\tau}^{\Gamma}$ and $\mathcal{S}_{-\tau}^{\Gamma}$ as approximations of which elements of $\mathcal{B}^{\phi}_{N_{\phi}}$ are and are not, respectively, in the language being considered); $\forall q \in \{ +\tau, -\tau\}$, I can bound $\vert \mathcal{S}_{q}^{\Gamma} \vert$ by: $\vert \mathcal{S}_{q}^{\Gamma} \vert 
    \geq \big( \textit{Poly}(N_{\phi}) \big)^{-1} \vert \mathcal{B}^{\phi}_{N_{\phi}} \vert$, where $\textit{Poly}$ is the polynomial as in Def.~\ref{PhiBalanced} and -- as usual -- $N_{\phi}$ is inferred from the choice of $\tau$. Therefore,
\begin{align}
    \label{eqn:firstAcceptingBoundingBound}
    \mathcal{A}[\mathcal{S}_{\tau}^{\Gamma}] 
       &\geq
    1 - 7\dfrac{c^{N_{\phi}} \vert \mathcal{B}^{\phi}_{N_{\phi}} \vert}{\vert \mathcal{S}_{\tau}^{\Gamma} \vert} 
    \geq
    1 - 7 \dfrac{c^{N_{\phi}} \vert \mathcal{B}^{\phi}_{N_{\phi}} \vert}{\big( \textit{Poly}(N_{\phi}) \big)^{-1} \vert \mathcal{B}^{\phi}_{N_{\phi}} \vert}
    =
    1 - 7 \textit{Poly}(N_{\phi}) c^{N_{\phi}}.
\end{align}
\underline{\emph{Case Two: }$\tau$ $<$ 0} \hspace{0.7cm}
Now $\mathcal{Q}$ deciding incorrectly means it has returned Reject (again, implicitly: returning $-1$) when it should have returned Accept (i.e. $+1$), hence the ideal acceptance fraction is zero. Using a very similar argument to the case where $\tau$ $>$ 0, I start by considering $\big[ \mathcal{S}^{\Gamma}_{\tau} \big]^{\mathbb{T}}_{\text{Accept}}$ and again bound the number of elements it has. 

When $\tau < 0$, any element of $\big[ \mathcal{S}^{\Gamma}_{\tau} \big]^{\mathbb{T}}_{\text{Accept}}$ must be incorrectly decided by $\mathcal{Q}$. Again, this can only happen if that element is in $\big[ \mathcal{B}^{\phi}_{N_{\phi}} \big]^{\mathcal{P}}_{\perp}$. Therefore,
\begin{align}
    \big \vert \big[ \mathcal{S}^{\Gamma}_{\tau} \big]^{\mathbb{T}}_{\text{Accept}} \big \vert
    \leq
    \big \vert \big[ \mathcal{B}^{\phi}_n \big]^{\mathcal{P}}_{\perp} \big \vert \leq 2\mathcal{F}_p^{(\phi, N_{\phi})} \big \vert \mathcal{B}^{\phi}_{N_{\phi}} \big \vert.
\end{align}
Similarly to the first case above, this implies that:
\begin{align}
\label{eqn:secondAcceptingBoundingBound}
        \mathcal{A}[\mathcal{S}_{\tau}^{\Gamma}] =
        \dfrac{\big \vert \big[ \mathcal{S}^{\Gamma}_{\tau} \big]^{\mathbb{T}}_{\text{Accept}} \big \vert}{ \big \vert \mathcal{S}^{\Gamma}_{\tau} \big \vert }
        &\leq
        \dfrac{\mathcal{F}_p^{(\phi, N_{\phi})} \big \vert \mathcal{B}^{\phi}_{N_{\phi}} \big \vert}{ \big \vert \mathcal{S}^{\Gamma}_{\tau} \big \vert }
        \leq
         7 \dfrac{c^{N_{\phi}} \vert \mathcal{B}^{\phi}_{N_{\phi}} \vert}{\vert \mathcal{S}_{\tau}^{\Gamma} \vert}
        \leq 
         7 \dfrac{c^{N_{\phi}} \vert \mathcal{B}^{\phi}_{N_{\phi}} \vert}{\big( \textit{Poly}(N_{\phi}) \big)^{-1} \vert \mathcal{B}^{\phi}_{N_{\phi}} \vert}
         =
         7 \textit{Poly}(N_{\phi}) c^{N_{\phi}}.
    \end{align}
    To complete the proof, the factor of $7$ (in Eqn.~\ref{eqn:secondAcceptingBoundingBound}) is absorbed into the polynomial, $\textit{Poly}$, and the two cases in Eqn.~\ref{eqn:firstAcceptingBoundingBound} and Eqn.~\ref{eqn:secondAcceptingBoundingBound} are combined into:
    \begin{align}
       \textit{If } \tau > 0, \hspace{0.7em} \mathcal{A}[\mathcal{S}_{\tau}^{\Gamma}] &\geq 1 - \textit{Poly}(N_{\phi}) c^{N_{\phi}}.\\
         \textit{If } \tau < 0, \hspace{0.7em} \mathcal{A}[\mathcal{S}_{\tau}^{\Gamma}] &\leq \textit{Poly}(N_{\phi}) c^{N_{\phi}}.
    \end{align}
    Where $\textit{Poly}: \mathbb{R}^+ \rightarrow \mathbb{R}^+$ is a polynomial function as in Def.~\ref{PhiBalanced}, $N_{\phi}$ is as defined in Def.~\ref{pIsoDef}, and $\mathcal{S}^{\Gamma}_{\tau}$ retains its meaning from Def.~\ref{def:parameterSlice}.
\end{proof}
\begin{Note}
    \label{note:QvaluesInAppendix}
    Due to the construction of $\Gamma$, $\Gamma(x) > 0 \iff \mathcal{Q} = +1$, and $\Gamma(x) < 0 \iff \mathcal{Q} = -1$, as can be seen from Def.~\ref{def:canonParam}.
\end{Note}
\subsection{ Lemma~\ref{lem:PAnalysis} and its Proof}
\begin{lemma}
    \label{lem:PAnalysis}
    Given the notation and assumptions of Lemma~\ref{AccBoundLemma} (and its proof),
        $\big \vert \big[ \mathcal{B}^{\phi}_n \big]^{\mathcal{P}}_{\psi (p) } \big \vert
        =
        \dfrac{\vert \mathcal{B}^{\phi}_n \vert}{2}
        -
        \dfrac{(1 - p) \cdot \vert [\mathcal{B}^{\phi}_n ]_{\perp}^{\mathcal{P}} \vert}{2}$.
\end{lemma}
\begin{proof}
    Consider the functioning of $\mathcal{P}: \Sigma^* \rightarrow \{ \text{Accept}, \text{Reject}, \perp \}$ from Eqn.~4 in Ref.~\cite{farago2016roughly}:
    \begin{align}
        \mathcal{P}(x)
        &= 
        \begin{cases}
            \text{Accept} &\text{ if } \omega(\phi(x)) \text{ is odd} \\
            \text{Reject} &\text{ if } \omega(\phi(x)) \text{ is even and } \phi(x) \text{ is asymmetric}\\
            \perp &\text{ if } \phi(x) \text{ is symmetric}
        \end{cases},
    \end{align}
    where $\omega: \Sigma^* \rightarrow \mathbb{N}_0$ is the weight function and $\phi: \Sigma^* \rightarrow \Sigma^*$ is a P-isomorphism that always exists for paddable languages, as in Theorem~\ref{faragoRoughMain}. Considering just the $ x \in \Sigma^*$ where $x \in \mathcal{B}^{\phi}_n$, as $\phi$ is a P-isomorphism so is $\phi^{-1}$, therefore $\mathcal{B}^{\phi}_n$ may be re-expressed as:
    \begin{align}
        \label{eqn:B_nEquating}
        \mathcal{B}^{\phi}_n
        &=
        \big \{ x \in \Sigma^* \text{ } \vert \text{ }  N_{\phi}(x) = n \big \}
        =
        \big \{ x \in \Sigma^*  \text{ } \vert \text{ }  \phi(x) \in \Sigma^n \big \}
        \Rightarrow 
        \text{ Letting } \phi(x) = z \text{, }
        \mathcal{B}^{\phi}_n
        =
        \big \{ \phi^{-1}(z) \text{ }  \text{ } \vert\text{ } \phi(\phi^{-1}(z)) \in \Sigma^n \big \}
        =
        \big \{ \phi^{-1} (z) \text{ }  \text{ } \vert \text{ }  z \in  \Sigma^n \big \}.
    \end{align}
    Then consider the subset of $\mathcal{B}^{\phi}_n$ for which $\mathcal{P}$ returns Accept, using Eqn.~\ref{eqn:B_nEquating}:
    \begin{align}
        \big \vert \big[ \mathcal{B}^{\phi}_n \big]^{\mathcal{P}}_{\psi (1)} \big \vert
        &=
        \big \vert \big[ \mathcal{B}^{\phi}_n \big]^{\mathcal{P}}_{\text{Acc}} \big \vert
        =
        \big \vert \big \{x \in \mathcal{B}^{\phi}_n \text{ } \vert \text{ }\omega(\phi(x)) \text{ is odd} \big \} \big \vert
        =
        \big \vert \big \{ y \in \Sigma^n \text{ } \vert \text{ } \omega(\phi(\phi^{-1}(y))) \text{ is odd} \big \} \big \vert
         =
         \big \vert \big \{ y \in \Sigma^n \text{ } \vert \text{ } \omega(y) \text{ is odd} \big \} \big \vert
         =
         \dfrac{\big \vert \Sigma^n \big \vert}{2}
         =
         \label{p=1Case}
         \dfrac{\big \vert \mathcal{B}^{\phi}_n \big \vert}{2},
    \end{align}
    where $\Sigma^n = \big \{ x  \in \Sigma^* \text{ } \vert \text{ } \vert x \vert = n \big\}$ (i.e. the set of all strings in $\Sigma^*$ of length $n \in \mathbb{N}$) and $\text{Acc}$ is used as a shorthand for $\text{Accept}$.
    The final equality in Eqn.~\ref{p=1Case} follows as $\phi$ is bijective.
    Similarly, consider the set of elements in $\mathcal{B}^{\phi}_n$ for which $\mathcal{P}$ returns Reject:
    \begin{align}
    \label{p=-1Case}
        \big \vert \big[ \mathcal{B}^{\phi}_n \big]^{\mathcal{P}}_{\psi (-1) } \big \vert
        &=
        \big \vert \big[ \mathcal{B}^{\phi}_n \big]^{\mathcal{P}}_{\text{Rej}} \big \vert
        =
        \big \vert \big \{ x \in \mathcal{B}^{\phi}_n \text{ } \vert \text{ } \omega(\phi(x)) \text{ is even and } \phi(x) \text{ is asymmetric} \big \} \big \vert\\
        &=
        \big \vert \big \{y \in \Sigma^n \text{ } \vert \text{ } \omega(\phi(\phi^{-1}(y))) \text{ is even and } \phi(\phi^{-1}(y)) \text{ is asymmetric} \big \} \big \vert
         =
        \big \vert \big \{y \in \Sigma^n \text{ } \vert \text{ } \omega(y) \text{ is even and } y \text{ is asymmetric} \big \} \big \vert \nonumber\\
        &=
       \big \vert  \big \{y \in \Sigma^n \text{ } \vert \text{ } \omega(y) \text{ is even}  \big \}
        \cap
        \big \{y \in \Sigma^n \text{ } \vert \text{ } y \text{ is asymmetric}  \big \} \big \vert
        =
        \big \vert \big \{ y \in \Sigma^n \text{ } \vert \text{ } \omega(y) \text{ is even } \big \} \bigg \backslash
        \big \{ y \in \Sigma^n \text{ } \vert \text{ } y \text{ is symmetric} \big \} \big \vert. \nonumber
    \end{align}
    Note that $\text{Rej}$ is used as a shorthand for $\text{Reject}$ and the backslash, i.e. $A \backslash B$ (as in Eqn.~\ref{p=-1Case}), above is used to denote the set complement of the set $B$ within the set $A$.
    
    As $\omega(y)$ is even for any symmetric $y \in \Sigma^*$ and the number of symmetric strings of length $n \in \mathbb{N}$ is exactly the size of $ [\mathcal{B}^{\phi}_n ]_{\perp}^{\mathcal{P}}$, Eqn.\ref{p=-1Case} becomes:
    \begin{align}
    \label{eqn:penultimeateEquationLemma5}
        \big \vert \big[ \mathcal{B}^{\phi}_n \big]^{\mathcal{P}}_{\psi (-1) } \big \vert
         =
         \dfrac{ \big \vert \Sigma^n \big \vert}{2} - \big \vert \big \{y \in \Sigma^n \text{ } \vert \text{ }  y \text{ is symmetric} \big \} \big \vert
         =
         \dfrac{\big \vert \mathcal{B}^{\phi}_n \big \vert}{2}
         - \big \vert [\mathcal{B}^{\phi}_n ]_{\perp}^{\mathcal{P}} \big \vert. 
    \end{align}
I combine the two cases (with differing $p \in \{ -1, +1 \}$) in Eqn.~\ref{p=1Case} and Eqn.~\ref{eqn:penultimeateEquationLemma5} into a single equation for $\big \vert \big[ \mathcal{B}^{\phi}_n \big]^{\mathcal{P}}_{\psi (p) } \big \vert$: $\forall n \in \mathbb{N}$, $\forall p \in \{ + 1, -1 \}$, 
    \begin{align}
        \big \vert \big[ \mathcal{B}^{\phi}_n \big]^{\mathcal{P}}_{\psi (p) } \big \vert
        &=
        \dfrac{\big \vert \mathcal{B}^{\phi}_n \big \vert}{2}
        -
        \dfrac{(1 - p) \cdot \big \vert [\mathcal{B}^{\phi}_n ]_{\perp}^{\mathcal{P}} \big \vert}{2}.
    \end{align}
\end{proof}

\end{document}